\newtheorem{lemma}{Lemma}
\newtheorem{theorem}{Theorem}
\newcommand{\seth}{SETH\xspace}
\newcommand{\kcnf}[1][k]{$#1$-CNF\xspace}
\newcommand{\ksat}[1][k]{$#1$-SAT\xspace}
\newcommand{\maxksat}[1][k]{MAX-$#1$-SAT\xspace}
\newcommand{\cnfsat}{CNF-SAT\xspace}
\newcommand{\cnf}{CNF\xspace}
\newcommand{\dnf}{DNF\xspace}
\newcommand{\aczero}{\mathbf{AC^0}\xspace}
\newcommand{\acczero}{\mathbf{ACC^0}\xspace}
\newcommand{\tczero}{\mathbf{TC^0}\xspace}
\newcommand{\nats}{\mathbb{N}}
\newcommand{\ppoly}{\mathbf{P/Poly}\xspace}
\newcommand{\nexp}{\mathbf{NEXP}\xspace}
\newcommand{\ntime}{\mathbf{NTIME}\xspace}
\newcommand{\np}{\mathbf{NP}\xspace}
\newcommand{\p}{\mathbf{P}\xspace}
\title{Satisfiability Algorithms for Restricted Circuit Classes}
\author{Stefan Schneider}
\begin{document}
\date{May 10, 2013}

\maketitle
\begin{abstract}
  In recent years, finding new satisfiability algorithms for various
  circuit classes has been a very active line of research. Despite
  considerable progress, we are still far away from a definite answer
  on which circuit classes allow fast satisfiability algorithms. This
  survey takes a (far from exhaustive) look at some recent
  satisfiability algorithms for a range of circuit classes and
  highlights common themes. A special focus is given to connections
  between satisfiability algorithms and circuit lower bounds. A second
  focus is on reductions from satisfiability algorithms to a range of
  polynomial time problems, such as matrix multiplication and the
  Vector Domination Problem.
\end{abstract}

\section{Introduction}
Ever since Cook and Levin \cite{Cook1971, Levin1973} proved the
$\np$-completeness of \ksat[3], satisfiability problems played a
central role within complexity theory. Their result puts
satisfiability algorithms in the center of the arguably most important
question in computer science, $\p$ vs. $\np$.

To show $\p = \np$ with a satisfiability algorithm, the algorithm has
to run in polynomial time, which is a long way from the trivial
$\tilde{O}\left(2^n\right)$ exhaustive search algorithm. This raises
the question if we can improve over exhaustive search at all. In
particular, we want to know if we can find satisfiability algorithms
that improve over exhaustive search by an exponential factor,
i.e. run in time $\tilde{O}\left(2^{(1-\mu) n}\right)$ for some
constant $\mu > 0$. For such a runtime we refer to $\mu$ as the
\emph{savings} of the algorithm.

For \ksat the answer is yes due to Monien and Speckenmeyer
\cite{MonienSpeckenmeyer1985}. Since then, several faster algorithms
for \ksat have been found, e.g. \cite{PaturiPudlakZane1997,
  Schoning2002, PaturiPudlakSaksZane1998, Hertli2011}. Unfortunately
it is not possible to give an extensive overview for \ksat algorithms
here. We will however discuss the algorithm by Paturi, Pudl\'{a}k and
Zane in further detail.

For other circuit classes this question is still wide open. In
particular, for general polynomial size circuits no fast
satisfiability algorithms are known. If one believes that there is
such an algorithm, then one approach is to tackle more and more
general classes of circuits. If one believes that there are no fast
algorithms for general polynomial size circuits, then where is the
border between circuit classes that have fast satisfiability
algorithms and classes that don't? The \emph{Strongly Exponential Time
  Hypothesis} (\seth) \cite{ImpagliazzoPaturi1999} conjectures that,
while \ksat does have a fast satisfiability algorithm for any constant
$k$, many variants of \cnfsat do not. The hypothesis says that for
every constant $\mu > 0$, there is a $k$ such that \ksat cannot be
solved in time $O\left(2^{(1-\mu) n}\right)$. This would put \cnfsat
with no restriction on the width of the clauses or the size of the
formula on the side with no fast satisfiability algorithm.

For \cnfsat, there is a duality between the width of the clauses and
the size of the circuit \cite{CalabroImpagliazzoPaturi2006}. As a
consequence, a corollary of \seth is that there are no satisfiability
algorithms with constant savings for \cnf formulas with superlinear
size \cite{CalabroImpagliazzoPaturi2006}. It is therefore not
surprising that even beyond \cnf formulas, many satisfiability
algorithms with constant savings require the circuit to be linear
size.

In this survey we highlight several circuit classes that allow
satisfiability algorithms with constant savings. For a survey on fast
algorithms for $\np$-complete problems that goes beyond satisfiability
problems see the articles by Woeginger
\cite{Woeginger2003,Woeginger2008}.

Most restricted circuit classes were first defined in the context of
circuit lower bounds. In some sense, algorithms and lower bounds are
two sides of the same question. Algorithms try to find the most
efficient way to compute something, while lower bounds give
limitations on what is possible. This survey highlights several
results that show how advances on one problem can lead to advances in
the other. Intuitively, any satisfiability algorithm has to use the
structure of the circuit somehow to improve over exhaustive
search. The same structural properties can be used to argue which
functions a circuit class fails to represent.

Section \ref{sec:properties} discusses algorithms that rely on such a
property. Section \ref{sec:Williams} discusses a more direct
connection between satisfiability algorithms and lower bounds. In this
result by Williams \cite{Williams2010,Williams2011}, the lower bounds
are derived from satisfiability algorithms, where the algorithm is
treated as a black box. As a consequence, the connection between the
satisfiability algorithm and the lower bound does not directly rely on
a property of the circuit class, although the satisfiability algorithm
does. This result not only provides a blueprint for potentially
finding new lower bounds, but also describes the relationship between
algorithms and lower bounds in a more formal framework.

In Section \ref{sec:splitList} we discuss satisfiability algorithms
that use results on various polynomial time algorithms such as matrix
multiplication to get constant savings. Those algorithms exploit links
between satisfiability algorithms for circuits and problems within
computer science beyond circuits.

Santhanam \cite{Santhanam2012} wrote a survey with a similar focus as
this survey, discussing, among others, the results on \kcnf, DeMorgan
Formulas and $\acczero$ in greater detail.

\section{Preliminaries }
In this paper we consider different circuit classes. For a set of
functions $B$, a \emph{circuit} over basis $B$ is a sequence
$q_0,\ldots,q_{m}$ such that $q_0$ and $q_1$ are the constants $0$ and
$1$, $q_2$ to $q_{n+1}$ are the $n$ input variables $x_1,\ldots,
x_n$. For every index $j \geq n+2$ the circuit is defined by a gate,
which is defined by a function from $B$ and the indices of the inputs
to that function. The indices of the inputs are required to be less
than $j$. The semantics of a circuit is that at every position, the
function from $B$ is applied to the inputs, and given values to the
input variables, we can compute the value of every gate from left to
right. The last gate of a circuit is called the \emph{output
  gate}. The \emph{size} of a circuit is the number of gates other
than the constants and inputs.

A \emph{restricted circuit class} is a restriction of the definition
of the circuit is some way. Depending on the context, we refer to the
circuit class as either the set of circuits itself, or the class of
problems that can be decided with such a circuit. All restricted
circuit classes fix a basis $B$, and many restrict the size of the
circuit. For example, $\ppoly$ is the class of problems that can by
decided with a polynomial size circuit and basis $B_2$, which is the
set of all $8$ functions with fanin two. Some classes restrict the
\emph{depth} of a circuit, which is defined as the longest path from
the output gate to an input gate. Intuitively, the depth is the time
it takes to compute the result when running the computation with
maximal parallelization. Other circuit classes may restrict the
circuit to \emph{formulas}. In a formula, the output of a gate can be
used at most once as input to other gates. We do however allow that
literals and constants are used more than once. (Literals are
variables or their negation.)

One important class of circuits are \emph{formulas in conjunctive
  normal form} (\cnf). A \cnf is a circuit where the output gate is an
AND of clauses, and every clause is an OR of literals. \kcnf further
restricts the fanin of a clause to $k$.

$\aczero$ circuits allow AND and OR gates with arbitrary fanin, but only
constant depth. We can assume without loss of generality that an
$\aczero$ circuit consists of alternating layers of AND and OR gates, as
we can otherwise merge two gates into one.

$\acczero$ extends the basis of $\aczero$ circuits and also allows
$\text{MOD}_m$ gates for any $m>1$ for arbitrary fanin.

Another extension of $\aczero$ discussed in this paper is $\tczero$,
the class of constant depth circuits where the gates are threshold
gates, i.e. gates that take a weighted sum of the inputs and compare
it to a threshold. For a threshold gate, both the threshold and the
weights can be arbitrary real numbers. In this survey, we consider
only threshold circuits of depth two. We call the output gate the
\emph{top-level gate} and the other threshold gates the
\emph{bottom-level gates}. We call the number of variable occurrences
the \emph{number of wires}.

For the \maxksat problem we are given a \kcnf and a threshold $t$, and
need to decide if there is an assignment that satisfies at least $t$
clauses. We view it as the satisfiability problem on the circuit class
that consists of $k$-clauses and a threshold gate as the output gate.

A DeMorgan formula is a formula on AND and OR gates of fanin two. The
depth is unbounded for DeMorgan Formulas. Note that we do not need to
allow NOT gates other than when negating literals as we can push any
NOT gates to the literals using DeMorgan's Laws. For some arguments it
is more intuitive to view a DeMorgan formula as a binary tree where
the literals are leafs and the gates are inner nodes. With this in
mind, we refer to the gates that lead up to a gate $q_j$ as the
subtree rooted at $q_j$. We measure the size of a DeMorgan formula in
its \emph{leaf size}, the number of input literals.

Given a circuit $C$ on a variable set $V$, an \emph{assignment} is a
function $V \to \{0,1\}$. A \emph{restriction} is an assignment to a
subset of $V$. For a restriction $\rho$, we denote by $C|_{\rho}$ the
circuit $C$ where all variables restricted by $\rho$ are replaced by
constants accordingly.

For exponential functions $f(n)$, we use $\tilde{O}(f(n))$ as a
shorthand for $O(f(n))\text{poly}(n)$. For algorithms with runtime
$\tilde{O}\left(2^{(1-\mu) n}\right)$, we refer to $\mu$ as the
\emph{savings} of the algorithm. Typically, we are interested in
algorithms with constant savings, although the results in Section
\ref{sec:Williams} rely on algorithms with subconstant savings.

\section{Properties of Circuit Classes}
\label{sec:properties}
In this section we consider three satisfiability algorithms for
different circuit classes that are closely linked to lower bounds for
the same class. For \kcnf, we discuss a satisfiability algorithm and a
lower bound by Pudl\'{a}k, Paturi and Zane
\cite{PaturiPudlakZane1997}. For $\aczero$ we discuss an algorithm and
a lower bound by Impagliazzo, Matthews and Paturi
\cite{ImpagliazzoMatthewsPaturi2012}. Lastly, for DeMorgan Formulas we
discuss a lower bound by Subbotovskaya \cite{Subbotovskaya1961} and a
satisfiability algorithm by Santhanam \cite{Santhanam2010} that rely
on the same properties.

There is no clear relation if the algorithm follows from the lower
bound or vice versa. For some of these results, the results were
motivated by the need to find faster satisfiability algorithms, with
the lower bounds a consequence of the proof technique. In other cases,
the idea for the algorithm stems from the lower bound. In the case of
DeMorgan Formulas, the circuit lower bound precedes the algorithm by
almost 50 years.

The three algorithms are similar in several aspects. In all three
examples we can extract a property of the circuit class that neither
talks about satisfiability algorithms nor lower bounds, but both are
derived from the property without using any additional information
about the circuit class. For \kcnf, the statement is about how to
encode satisfying assignments, for $\aczero$ it is about partitions of
the hypercube and for DeMorgan Formulas it is about shrinkage under
restriction.

Furthermore, all three algorithms follow the same general outline:
Restrict a set of variables to constants, simplify the circuit, and
repeat until the output of the circuit is a constant. Where the
algorithms rely on properties of the underlying circuit is when
simplifying the restricted circuit. For \kcnf, the simplification step
removes variables beyond the ones already restricted. For $\aczero$,
the simplify step reduces the depth of the circuit, and for DeMorgan
Formulas, the simplify step shrinks the size of the circuit by a
nontrivial number of gates.

The basis of these algorithms is the \emph{random restriction
  technique}, which was first used by Subbotovskaya in her lower bound
for DeMorgan Formulas. In its basic form, one picks a random variable
(or a random set of variables) and restricts the variable randomly to
$0$ or $1$. However, not all algorithms that rely on the random
restriction technique are randomized algorithms.  Some algorithms like
the algorithm for \ksat are more natural as a randomized algorithms,
and additional ideas are required to derandomize it. Other algorithms
such as Santhanam's algorithm for DeMorgan Formulas explore all
possible random choices for a restriction. As a result, the algorithm
is deterministic.

\subsection{Satisfiability Coding Lemma}
\label{sec:PPZ}
The Satisfiability Coding Lemma \cite{PaturiPudlakZane1997} provides a
description of the space of satisfying assignments of a \kcnf in terms
of how many bits are required to describe an assignment.

Let $F$ be a formula in conjunctive normal form on $n$ variables. A
satisfying assignment $\alpha$ is \emph{isolated in direction} $x$ if
the assignment that only differs from $\alpha$ in variable $x$ does
not satisfy $F$. We say an assignment is isolated, if it is isolated
in all directions and $j$-isolated, if it is isolated in exactly $j$
directions. If $\alpha$ is isolated in direction $x$, then there must
be a clause $C$, such that $\alpha$ sets all literals of this clause
to false except the literal corresponding to $x$. We call such a
clause a \emph{critical clause}.

An encoding of a satisfying solution is an injective map from
satisfying solutions to binary strings. The simplest such encoding is
to fix a permutation of the variables, and then describe the
assignment to every variable as an $n$-bit string. However, since $F$
is a \cnf, we can do better. If there is a unit clause $x$ (i.e. a
clause consisting only of the literal $x$), then every assignment
satisfying $F$ must assign the value $1$ to $x$. Likewise, if there is
a clause $\overline{x}$, every satisfying assignment sets $x$ to
$0$. In either case we can simply omit $x$ in our description of the
satisfying assignment, and describe the assignment as an $(n-1)$-bit
string that still uniquely defines the satisfying
assignment. Furthermore, once we assign a value to $x$, the formula
simplifies. If we assign $1$ to $x$, clauses containing the literal
$x$ are satisfied and can be omitted. Clauses containing
$\overline{x}$ can be simplified by omitting the literal
$\overline{x}$, as this literal cannot be used to satisfy the clause
anymore. Therefore, by repeatedly either giving the assignment of the
next variable in the permutation or simplifying using a unit clause we
can encode any satisfying assignment. Algorithm \ref{alg:encode} gives
the full procedure to encode an assignment. It takes as input a
satisfying assignment $\alpha$ and a permutation $\pi$ on the
variables and produces an encoding $\text{enc}_{\alpha,\pi}$. The
encoding algorithm fails to produce a string if the input assignment
$\alpha$ does not satisfy $F$. Algorithm \ref{alg:decode} provides a
decoding algorithm for the opposite direction. It takes as input the
string $\text{enc}_{\alpha,\pi}$ and the permutation $\pi$ and outputs
the original assignment $\alpha$. It fails if the input is not a valid
encoding of a satisfying assignment.

%\begin{figure*}[ttt!]
%\caption{Algorithms to encode and decode satisfying assignments}
%\begin{minipage}[t]{0.5\textwidth}
\begin{algorithm}[t]
\label{alg:encode}
  \DontPrintSemicolon
  \KwData{formula $F$, number of variables $n$, permutation $\pi$,
    satisfying assignment $\alpha$}
  \KwResult{$\text{enc}_{\alpha,\pi}$ encoding $\alpha$}
  $\text{enc} = \lambda$\;
  \For{$i=1$ to $|V|$}{
   $x \gets \pi(i)$\;
   \If{$\{x\} \in F$ and $\{\overline{x}\} \in F$}{
     return FALSE
     }
    \If{$\{x\} \in F$}{
      $v \gets 1$\;
    } \ElseIf{$\{\overline{x}\} \in F$}{
      $v \gets 0$\;
    }
    \ElseIf{ $\{x\} \not\in F$ and $\{\overline{x}\} \not\in F$}{
      $v \gets \alpha(x)$\; 
      $\text{enc} \gets \text{enc} \circ v$\; 
     }
     $F \gets \text{simplify}\left(F|_{x = v}\right)$\;
   }
   \Return enc\;
 \caption{Encoding satisfying assignments}
\end{algorithm}
%\end{minipage}
%\hfill
%\begin{minipage}[t]{0.5\textwidth}
\begin{algorithm}[t]
  \label{alg:decode}
  \DontPrintSemicolon
  \KwData{formula $F$, number of variables $n$, permutation $\pi$, encoding $\text{enc}_{\alpha,\pi}$}
  \KwResult{$\alpha$}
  $\alpha \gets \emptyset$\;
  $j \gets 0$\;
  \For{$i=1$ to $|V|$}{
   $x \gets \pi(i)$\;
   \If{$\{x\} \in F$ and $\{\overline{x}\} \in F$}{
     return FALSE
     }
    \If{$\{x\} \in F$}{
      $v \gets 1$\;
    } \ElseIf{$\{\overline{x}\} \in F$}{
      $v \gets 0$\;
    }
    \ElseIf{ $\{x\} \not\in F$ and $\{\overline{x}\} \not\in F$}{
      \If{$j>|enc|$}{\Return FALSE}
      $v \gets \text{enc}_j$\;
      $j \gets j+1$\;                                      
     }
     $\alpha \gets \alpha \cup \{x = v\}$\;
     $F \gets \text{simplify}\left(F|_{x = v}\right)$\;
   }
   \Return $\alpha$\;
 \caption{Decode a string to an assignment}
\end{algorithm}
%\end{minipage}
%\hfill
%\end{figure*}

The length of the encoding depends on the number of unit clauses we
encounter while encoding. The Satisfiability Coding Lemma gives a
bound on the expected length if we pick the permutation uniformly at
random.

\begin{lemma}[Satisfiability Coding Lemma]
  Let $F$ be a \kcnf and let $\alpha$ be a $j(\alpha)$-isolated
  assignment. For a uniformly chosen permutation $\pi$ we have
\begin{equation*}
  E[|\text{enc}_{\alpha,\pi}|] \leq n - j(\alpha)/k
\end{equation*}
\end{lemma}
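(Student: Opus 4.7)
The plan is to express the encoding length as $n$ minus the number of variables forced by a unit clause during the encoding run, and then use linearity of expectation together with a critical-clause argument to lower-bound the expected number of forced variables by $j(\alpha)/k$.

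Concretely, inspecting Algorithm \ref{alg:encode} shows that at each step the current variable $\pi(i)$ either contributes exactly one bit to $\text{enc}$ or is forced by a unit clause and contributes nothing. Hence $|\text{enc}_{\alpha,\pi}| = n - F(\pi)$, where $F(\pi)$ counts the forced variables, so by linearity of expectation it suffices to show that for every direction $x$ in which $\alpha$ is isolated, $\Pr_\pi[x \text{ is forced}] \geq 1/k$.

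For each such $x$ I would fix a critical clause $C_x$, which exists by the isolation hypothesis and has width at most $k$. The key claim is that whenever $\pi$ orders $x$ last among the variables of $C_x$, the variable $x$ gets forced at its step. The reason is that by the time step $i = \pi^{-1}(x)$ is reached, every other variable $y \in C_x$ has already been assigned some value; an induction on the steps shows this value must agree with $\alpha(y)$ (any unit clause that fires in the simplified formula is implied by the previously chosen values together with $F$, and $\alpha$ is a satisfying extension of those values, so $\alpha$ must agree with the forced bit). Since $\alpha$ sets every non-$x$ literal of $C_x$ to false, the simplification step has stripped all of those literals from $C_x$, leaving a unit clause on $x$ that forces $x$.

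Finally, for a uniformly random permutation, the probability that $x$ appears last among the at most $k$ variables of $C_x$ is at least $1/k$. Summing over the $j(\alpha)$ isolated directions gives $E[F(\pi)] \geq j(\alpha)/k$, and hence $E[|\text{enc}_{\alpha,\pi}|] \leq n - j(\alpha)/k$. The main obstacle, and the one place where satisfiability of $\alpha$ is essential, is the inductive consistency claim that the partial assignment built by the algorithm always agrees with $\alpha$; without it, a forced value on some earlier variable could disagree with $\alpha$ and break the reduction of $C_x$ to a unit clause on $x$.
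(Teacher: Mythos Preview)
Your proposal is correct and follows essentially the same approach as the paper: express the encoding length as $n$ minus the number of forced variables, use a critical clause for each isolated direction, observe that $x$ is forced whenever it is last among the variables of its critical clause, and apply linearity of expectation. You are more careful than the paper in two places---you write ``at least $1/k$'' rather than ``$1/k$'' since the clause may have fewer than $k$ variables, and you explicitly justify the inductive consistency claim that the partial assignment always agrees with $\alpha$---but these are refinements of the same argument, not a different route.
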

\begin{proof}
  Let $x$ be a variable such that $\alpha$ is isolated in direction
  $x$ and let $C$ be its critical clause. We can omit the bit
  describing the assignment for $x$ if $C$ is a unit clause when $x$
  occurs in the permutation. This happens exactly when when $x$ is the
  last variable of $C$ in the permutation. Since we choose the
  permutation uniformly at random, we can omit $x$ with probability
  $1/k$. By linearity of expectation we omit $j(\alpha)/k$ bits in
  expectation.
\end{proof}

We can turn the Satisfiability Coding Lemma into a lower bound. We
call a finite set of strings $S \subseteq \{0,1\}^*$
\emph{prefix-free}, if there are no two strings in $S$ such that one
is a prefix of the other. Note that for a fixed permutation, the
strings encoding satisfying assignments are prefix-free.

\begin{theorem}
  A \kcnf has at most $2^{n-n/k}$ isolated satisfying assignments.
\end{theorem}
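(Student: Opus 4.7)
The strategy is to combine the Satisfiability Coding Lemma with Kraft's inequality on the prefix-free set of encodings. First, since an isolated assignment is by definition $n$-isolated, the Satisfiability Coding Lemma immediately yields $E_\pi[|\text{enc}_{\alpha,\pi}|] \leq n - n/k$ for every isolated satisfying assignment $\alpha$, where $\pi$ is a uniformly random permutation of the variables.

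Next, I would lift this expected-length bound to a bound on the exponential weight $2^{-|\text{enc}_{\alpha,\pi}|}$ by applying Jensen's inequality to the convex function $x \mapsto 2^{-x}$, giving $E_\pi[2^{-|\text{enc}_{\alpha,\pi}|}] \geq 2^{-E_\pi[|\text{enc}_{\alpha,\pi}|]} \geq 2^{-(n - n/k)}$. Summing over the set of all isolated satisfying assignments, of size $N$ say, and swapping the finite sum with the expectation gives $E_\pi\bigl[\sum_{\alpha} 2^{-|\text{enc}_{\alpha,\pi}|}\bigr] \geq N \cdot 2^{-(n - n/k)}$.

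To close the argument, I would invoke the observation noted immediately before the theorem: for each fixed permutation $\pi$, the encodings produced by Algorithm \ref{alg:encode} form a prefix-free set, because Algorithm \ref{alg:decode} deterministically consumes exactly the bits needed to recover $\alpha$. Kraft's inequality then gives $\sum_\alpha 2^{-|\text{enc}_{\alpha,\pi}|} \leq 1$ pointwise in $\pi$, so the expectation on the left is at most $1$. Combining the two bounds yields $N \leq 2^{n - n/k}$, as required.

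The only step that requires any care beyond bookkeeping is the application of Jensen's inequality to convert the linear expected-length bound coming from the Satisfiability Coding Lemma into the exponential weight that Kraft's inequality manipulates; the rest amounts to formalizing the prefix-freeness remark already established in the surrounding text.
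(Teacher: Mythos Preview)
Your proof is correct and uses the same three ingredients as the paper's own argument: the Satisfiability Coding Lemma, prefix-freeness of the encodings (hence Kraft's inequality), and Jensen's inequality. The only organizational difference is where Jensen is applied. The paper first averages the code length over both permutations and isolated assignments, extracts a single permutation $\pi$ achieving average length at most $n-n/k$, and then applies Jensen over the assignments for that fixed $\pi$ (via the concavity of $\log$) together with Kraft to bound $|S|$. You instead apply Jensen over $\pi$ separately for each assignment $\alpha$ to pass from $E_\pi[|\text{enc}_{\alpha,\pi}|]$ to $E_\pi[2^{-|\text{enc}_{\alpha,\pi}|}]$, sum, and use Kraft pointwise in $\pi$. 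Your route is marginally cleaner in that it avoids the ``there exists a good permutation'' step and stays entirely in expectation; the paper's route is a touch more combinatorial. Either way the bound $N\le 2^{n-n/k}$ falls out immediately.
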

\begin{proof}
  For each isolated assignment, the average code length is at most
  $n-n/k$. The same bound holds for the average code length over all
  permutations and satisfying assignments. Therefore, there is some
  permutation $\pi$ such that the average code length over all
  isolated assignments is $n-n/k$. Let $S$ be the set of all such
  codes given permutation $\pi$ and let $S' = \{s*^{n-|s|}\mid s\in
  S\}$, i.e. extend all strings to length $n$ by adding $*$. We can
  interpret strings in $S'$ as restrictions in the obvious way, where
  $*$ represents the free variables. Since $S$ is prefix-free, the
  restrictions in $S'$ are not overlapping. Since a restriction that
  leaves $l$ variables free covers $2^l$ assignments, we have
  $\sum_{s\in S} 2^{n-|s|} \leq 2^n$ and hence $\sum_{s\in S}
  2^{-|s|}\leq 1$. Therefore
  \begin{equation*}
    n-\frac{n}k \geq \sum_{s\in S} \frac{|s|}{|S|} = \sum_{s \in S}
    \frac{-\log 2^{-|s|}}{|S|} \geq -\log \left(\frac{\sum_{s\in S}
        2^{-|s|}}{|S|}\right) \geq \log(|S|)
  \end{equation*}
  using Jensen's Inequality. Hence $|S| \leq 2^{n-n/k}$.
\end{proof}

To get a satisfiability algorithm from the Satisfiability Coding
Lemma, consider the following algorithm, which we call the PPZ
algorithm: Guess a permutation and an $n$-bit string uniformly at
random and try to decode the string using the decode algorithm. Note
that the algorithm might guess a string that is longer than what is
actually read while decoding.

\begin{lemma}
  Let $F$ be a \kcnf and $\alpha$ be a $j(\alpha)$-isolated
  solution. The PPZ algorithm returns $\alpha$ with probability at
  least $2^{-n+j(\alpha)/k}$.
\end{lemma}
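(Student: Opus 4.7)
The plan is to decouple the two random choices made by the algorithm. Condition on the permutation $\pi$ first, and then analyze the probability over the random $n$-bit string.

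First I would observe that once $\pi$ is fixed, the decoding algorithm, when run on any string whose first $|\text{enc}_{\alpha,\pi}|$ bits agree with $\text{enc}_{\alpha,\pi}$, outputs $\alpha$. This is because the decoder only consumes a bit from its input string when it encounters a variable that is not forced by a unit clause, and it reads exactly $|\text{enc}_{\alpha,\pi}|$ bits in total; the remaining $n - |\text{enc}_{\alpha,\pi}|$ bits of the guessed string are ignored. Therefore, conditioned on $\pi$, the probability over the uniformly random $n$-bit guess that the algorithm returns $\alpha$ is exactly $2^{-|\text{enc}_{\alpha,\pi}|}$.

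Next I would average over $\pi$ and apply Jensen's inequality. Since $f(x) = 2^{-x}$ is convex,
\begin{equation*}
\Pr[\text{algorithm returns } \alpha] = E_\pi\!\left[2^{-|\text{enc}_{\alpha,\pi}|}\right] \geq 2^{-E_\pi[|\text{enc}_{\alpha,\pi}|]}.
\end{equation*}
By the Satisfiability Coding Lemma, $E_\pi[|\text{enc}_{\alpha,\pi}|] \leq n - j(\alpha)/k$, so
\begin{equation*}
\Pr[\text{algorithm returns } \alpha] \geq 2^{-n + j(\alpha)/k},
\end{equation*}
which is the desired bound.

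The only mildly subtle point is the first step: one must verify that it suffices for the random string to agree with $\text{enc}_{\alpha,\pi}$ on its \emph{prefix}, i.e.\ that the decoder genuinely ignores the trailing bits of the guessed string. This follows directly from the structure of Algorithm \ref{alg:decode}, which advances the pointer $j$ into the input only when no unit clause dictates the next value. Everything else is a mechanical application of Jensen's inequality and the previously established Satisfiability Coding Lemma.
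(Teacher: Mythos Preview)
Your proof is correct and follows essentially the same route as the paper: condition on $\pi$ to get conditional probability $2^{-|\text{enc}_{\alpha,\pi}|}$, then apply Jensen's inequality to the average over $\pi$ and invoke the Satisfiability Coding Lemma. Your explicit remark about the decoder ignoring trailing bits is exactly what the paper encapsulates in the parenthetical ``plus potentially some additional bits.''
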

\begin{proof}
  The main observation is that given a permutation $\pi$, PPZ returns
  $\alpha$ if and only if the algorithm guesses all bits according to
  $\text{enc}_{\alpha,\pi}$ (plus potentially some
  additional bits). The probability for this event is
  $2^{-|\text{enc}_{\alpha,\pi}|}$. Hence
  \begin{align*}
    P(\text{PPZ returns } \alpha) &= \sum_{\pi} P(\text{PPZ returns }
    \alpha \mid \pi) \frac1{n!} = \sum_{\pi}
    2^{-|\text{enc}_{\alpha,\pi}|} \frac1{n!} \\
    &\geq 2^{\frac1{n!}  \sum_{\pi} -|\text{enc}_{\alpha,\pi}|} \geq
    2^{-n+j(\alpha)/k}
  \end{align*}
  using Jensen's Inequality. 
\end{proof}

If we have at least one isolated solution, the probability is at least
$2^{-(1-1/k)n}$. We show that this success probability holds in
general. Intuitively, if there are no isolated solution, then there
must be many solutions.

\begin{theorem}
  The PPZ algorithm finds a satisfying assignment with probability at
  least $2^{-(1-1/k)n}$. 
\end{theorem}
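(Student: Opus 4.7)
The plan is to apply the previous lemma to each satisfying assignment and sum. Let $S \subseteq \{0,1\}^n$ denote the set of satisfying assignments of $F$. Then
\[
  P(\text{PPZ succeeds}) \;=\; \sum_{\alpha \in S} P(\text{PPZ returns } \alpha) \;\geq\; 2^{-n} \sum_{\alpha \in S} 2^{j(\alpha)/k},
\]
so it suffices to prove $\sum_{\alpha \in S} 2^{j(\alpha)/k} \geq 2^{n/k}$. I would establish this as a purely combinatorial inequality for any nonempty $S \subseteq \{0,1\}^n$, where $j_S(\alpha)$ is reinterpreted as the number of coordinates $i$ such that the Hamming neighbor of $\alpha$ in direction $i$ lies outside $S$. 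For $S$ the satisfying assignments of a \kcnf, this matches the original notion of being isolated in direction $i$.

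My approach is induction on $n$. The base case $n=0$ is trivial. For the inductive step, split $S$ into $S_0$ and $S_1$ according to the last coordinate, viewing each $S_b$ as a subset of $\{0,1\}^{n-1}$. If exactly one side is empty, say $S_1 = \emptyset$, then every $\alpha \in S_0$ has its direction-$n$ neighbor outside $S$, so $j_S(\alpha) = j_{S_0}(\alpha) + 1$, and the inductive hypothesis gives
\[
  \sum_{\alpha \in S} 2^{j_S(\alpha)/k} \;=\; 2^{1/k} \sum_{\alpha \in S_0} 2^{j_{S_0}(\alpha)/k} \;\geq\; 2^{1/k} \cdot 2^{(n-1)/k} \;=\; 2^{n/k}.
\]
If both $S_0$ and $S_1$ are nonempty, then a coordinate-wise comparison shows $j_S(\alpha) \geq j_{S_b}(\alpha)$ for $\alpha \in S_b$, and summing the two inductive bounds yields
\[
  \sum_{\alpha \in S} 2^{j_S(\alpha)/k} \;\geq\; 2 \cdot 2^{(n-1)/k} \;\geq\; 2^{n/k},
\]
where the last step uses $k \geq 1$.

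The intuition hinted at in the theorem statement, that few isolated solutions force many non-isolated solutions, is encoded uniformly in this inductive split: the one-side-empty case rewards isolation through the extra factor $2^{1/k}$, while the both-nonempty case rewards the doubling of the solution count. I expect the main subtlety to be the bookkeeping for $j_S(\alpha)$ versus $j_{S_b}(\alpha)$ in the second case, verifying that the first $n-1$ directions contribute identically while the direction-$n$ contribution only appears in $j_S$. Once this is settled the induction runs cleanly, and no further structural property of \kcnf is needed beyond what the previous lemma has already captured.
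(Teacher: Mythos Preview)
Your argument is correct. The induction on $n$ goes through exactly as you describe: in the one-side-empty case the last coordinate is always isolating so $j_S(\alpha)=j_{S_0}(\alpha)+1$, and in the both-nonempty case the first $n-1$ directions contribute identically to $j_S$ and $j_{S_b}$ while direction $n$ can only help $j_S$, so $j_S(\alpha)\ge j_{S_b}(\alpha)$ and the factor $2\ge 2^{1/k}$ closes the induction.

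The paper takes a different route to the same combinatorial inequality. It first proves the $k=1$ case, $\sum_{\alpha\in S}2^{j(\alpha)}\ge 2^n$, by a covering argument: for each $\alpha\in S$ form the subcube obtained by freeing exactly the isolated coordinates of $\alpha$, and observe that every $\beta\in\{0,1\}^n$ lies in the subcube of its Hamming-nearest element of $S$. This gives $\sum 2^{j(\alpha)}\ge 2^n$, and then the general-$k$ bound follows from the pointwise inequality $2^{-(n-j(\alpha))/k}\ge 2^{-(n-j(\alpha))}$. Your induction bypasses the covering picture entirely and proves $\sum 2^{j(\alpha)/k}\ge 2^{n/k}$ directly for all $k\ge 1$ at once. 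The covering proof is arguably more geometric and yields the sharper $k=1$ statement as its primary output, while your induction is more self-contained and avoids the extra pointwise comparison step; both use nothing about \kcnf beyond the previous lemma.
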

\begin{proof}
  Let $\text{sat}(F)$ be the set of satisfying assignments for $F$ and
  for $\alpha \in \text{sat}(F)$, let $j(\alpha)$ denote the degree of
  isolation. We first prove $\sum_{\alpha \in \text{sat}(F)}
  2^{-n+j(\alpha))} \geq 1$.

  Fix some permutation and for all $\alpha$ (satisfying or not), let
  $s(\alpha) \in \{0,1\}^n$ be the string describing the assignment
  according to the permutation. Note that this is not the same as the
  encoding of a satisfying assignment. Further, for satisfying
  assignments $\alpha$, let $s'(\alpha) \in \{0,1,*\}^n$ be
  $s(\alpha)$, where a position is replaced by $*$, if $\alpha$ is
  isolated in that direction. We interpret the string $s'(\alpha)$ as
  a restriction in the natural way and claim that every assignment
  $\beta \in \{0,1\}^n$ is covered by some restriction.

  Let $\beta$ be an arbitrary assignment and let $\alpha \in
  \text{sat}(F)$ be the satisfying assignment with the smallest
  Hamming distance to $\beta$, i.e. the two assignments differ on the
  smallest number of variables. The restriction $s'(\alpha)$ must have
  a $*$ on every position where $s(\alpha)$ and $s(\beta)$ differ, as
  otherwise there would be a satisfying assignment with a smaller
  Hamming distance to $\beta$. Hence every assignment $\beta$ is
  covered by at least one restriction, and therefore $\sum_{\alpha \in
    \text{sat}(F)} 2^{j(\alpha)} \geq 2^n$ and $2^{-n+j(\alpha))} \geq
  1$.

  The success probability of the PPZ algorithm is therefore lower
  bounded by
  \begin{align*}
    P(\text{PPZ returns some } \alpha \in \text{sat}(F)) &=
    \sum_{\alpha \in \text{sat}(F)} P(\text{PPZ returns } \alpha) \geq
    \sum_{\alpha \in \text{sat}(F)} 2^{-n+j(\alpha)/k} \\ &= 2^{-(1-1/k)n}
    \sum_{\alpha \in \text{sat}(F)} 2^{-(n-j(\alpha))/k} \\ &\geq 2^{-(1-1/k)n}
    \sum_{\alpha \in \text{sat}(F)} 2^{-n+j(\alpha)} \geq 2^{-(1-1/k)n}
  \end{align*}    
\end{proof}

By repeating the PPZ algorithm we get an algorithm that runs in time
$\tilde{O}\left(2^{(1-1/k)n}\right)$ and has an arbitrarily small
one-sided error.

\subsection{$\aczero$ Circuits}
\label{sec:aczero}
Impagliazzo, Matthews and Paturi \cite{ImpagliazzoMatthewsPaturi2012}
give a characterization of $\aczero$ circuits based on
restrictions. For every $\aczero$ circuit on $n$ variables with size
$cn$ and depth $d$ there is a partition of the hypercube into
restrictions such that the function described by the circuit is
constant for each restriction. The size of the partition is
$\tilde{O}\left(2^{(1-\mu_{c,d}) n}\right)$ for $\mu_{c,d} =
\frac1{O(\log c + d \log d)^{d-1}}$ and can be constructed with only
polynomial overhead over its size.

The main idea behind the proof is a depth reduction technique based on
H\aa stad's \emph{Switching Lemma} \cite{Hastad1987}. The Switching
Lemma says if you take a \cnf formula (or, symmetrically, a \dnf) and
apply a random restriction to its inputs, then with high probability
you can represent the resulting function as a small \dnf formula (or
\cnf). This method can be applied for depth reduction. Given an
$\aczero$ circuit with alternating AND and OR gates, apply the
Switching Lemma to the lowest two levels of the circuit. After
applying a random restriction, with high probability we can swap the
bottom two layers. We then have two consecutive AND (or OR) layers,
which can be combined. The result is a circuit with reduced depth. The
main technical obstacle here is that H\aa stad's original Switching
Lemma is not sufficient for the required savings. Instead, they prove
the \emph{Extended Switching Lemma}, which deals with the case of
switching several \cnf formulas on the same variables together.

A satisfiability algorithm follows immediately. Construct the
partition as above and check each partition if it is the constant $0$
or $1$. The savings of the resulting algorithm is then $\mu_{c,d}$.

Another immediate consequence of such a partition is a bound on either
the depth or the size required for parity. The only partition of the
parity function into restriction where the function is constant has to
restrict all $n$ variables. Hence the size of such a partition is
$2^n$. Solving the inequality $\tilde{O}\left(2^{(1-\mu_{c,d})
    n}\right) \geq 2^n$ for either the size or the depth gives that
every polynomial size $\aczero$ circuit requires at least depth
$\frac{\log n}{\log \log n} - o\left(\frac{\log n}{\log \log
    n}\right)$ and any depth $d$ circuit requires at least
$2^{\Omega\left(n^{\frac1{d-1}}\right)}$ gates. These bounds match
lower bounds derived from the original Switching Lemma by H\aa stad
\cite{Hastad1987}, which is not surprising given that the techniques
are strongly related.

Another lower bound that follows from this partition is a bound on the
\emph{correlation} of parity with an $\aczero$ circuit. For a circuit
$C$ and a function $f$, the correlation is defined as 
\begin{equation*}
  P(C(\alpha) = f(\alpha)) - P(C(\alpha) \neq f(\alpha))
\end{equation*}
where we choose the assignment $\alpha$ uniformly at random.

Consider an arbitrary $\aczero$ circuit $C$ and its partition into
$\tilde{O}\left(2^{(1-\mu_{c,d}) n}\right)$ restrictions. Any
restriction that contains more than one assignment agrees with parity
on exactly half of all values. Hence its contribution to the
correlation is $0$. On the other hand, a restriction to a single
assignment contributes only $2^{-n}$ to the correlation. Hence the
correlation between $C$ and parity is at most
\begin{equation*}
  2^{-n} \tilde{O}\left(2^{(1-\mu_{c,d}) n}\right) = \tilde{O}\left(2^{-\mu_{c,d} n}\right)
\end{equation*}

\subsection{DeMorgan Formulas}
\label{sec:deMorgan}
In 1961, Subbotovskaya \cite{Subbotovskaya1961} gave a lower bound on
the size of DeMorgan Formulas computing parity based on a random
restriction technique. Her result is credited as the first use of a
random restriction technique, which is used in many results after her,
including the satisfiability algorithm for \kcnf in Section
\ref{sec:PPZ}, for $\aczero$ circuits in Section \ref{sec:aczero}, and
for depth two threshold circuits in Section \ref{sec:threshold}.

Consider an arbitrary DeMorgan formula on $n$ variables and size
$s$. Pick a random set of $n-k$ variables and set them uniformly at
random to either $0$ or $1$, how many gates are still required for the
resulting function? Subbotovskaya's argument given below proves that
the number of gates shrinks to at most $\left(\frac{k}{n}\right)^{3/2}
s$ in expectation, giving a \emph{shrinkage exponent} of at least
$1.5$. It immediately follows that parity requires $O(n^{1.5})$ gates,
as the number of gates required after fixing $n-1$ variables to
constants is $1$. Andreev \cite{Andreev1987} (see also
\cite{Jukna2012}) later constructed a function that requires
$O(n^{2.5})$ gates. It follows from this construction that the same
function requires at least $n^{\omega + 1}$ gates where $\omega$ is
the shrinkage exponent. Ending a line of research to improve the lower
bound on the shrinkage exponent
\cite{ImpagliazzoNisan1993,PatersonZwick1993}, H\aa stad
\cite{Hastad1998} proves that the shrinkage exponent is $2$.

The main observation is the following. Consider an AND gate $q_i$ such
that the literal $x$ (i.e. the non-negated variable) is a direct
input. Let the gate $q_j$ be the other input, called the neighbor. If
the random restriction sets $x$ to $0$, then the AND gate always
evaluates to false. Since we have a formula, the output of the circuit
does not depend on $q_j$ anymore. 

As a consequence, if $q_j$ depends on the variable $x$, then we can
simplify the circuit. Since the output only depends on the value of
$q_j$ if $x$ is $1$, we can replace every occurrence of $x$ in the
subtree rooted at $q_j$ with $1$. For the rest of this section we can
therefore assume w.l.o.g. that the neighbor of a literal $x$ or
$\overline{x}$ does not contain the variable $x$. In a lower bound
argument, we can argue that if the neighbor depends on $x$, then the
formula cannot be minimal. From an algorithmic standpoint, we can
argue that if there are occurrences of the variable in the neighbor,
then we can simplify the circuit in polynomial time.

If the gate is an OR gate or the direct literal is negated then the
case is symmetric. Note that for this observation it crucial that
DeMorgan Formulas do not allow XOR gates, as we can otherwise fix a
direct input to either constant and the output still depends on the
neighbor.

Using the fact that we pick the restriction randomly we get the
following lemma.

\begin{lemma}
  Let $F$ be a minimal DeMorgan formula on $n$ variables with size
  $s$. If we restrict a random set of $n-k$ variable to $0$ or $1$
  uniformly at random, then the resulting formula has size at most
  $\left(\frac{k}{n}\right)^{3/2} s$ in expectation.
\end{lemma}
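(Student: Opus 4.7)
The plan is to reduce the ``restrict $n-k$ random variables'' process to a sequential one-variable-at-a-time analysis. Choosing a uniformly random subset of $n-k$ variables has the same distribution as taking the first $n-k$ variables of a uniformly random permutation, so I can restrict variables one at a time in a random order and chain the resulting expectations through iterated expectation. The argument then decomposes into a per-step shrinkage bound and a telescoping product over the $n-k$ steps.

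For the per-step bound, let $G$ be a minimal DeMorgan formula on $m$ variables with leaf size $t$, and pick a variable uniformly at random together with a value $v \in \{0,1\}$ uniformly at random. I claim $E[t'] \leq t(1 - 3/(2m))$, where $t'$ is the leaf size of the restricted, re-simplified formula. The account is done leaf by leaf. For each of the $t$ leaves $\ell$, with probability $1/m$ its variable is the one chosen, and then $\ell$ collapses to a constant, contributing expected shrinkage $1/m$. Moreover, since $G$ is minimal, the observation preceding the lemma ensures that the neighbor subtree $T_\ell$ does not mention the variable of $\ell$; with conditional probability $1/2$ the restricted literal takes the value that forces $\ell$'s parent gate to a constant, in which case the entire subtree $T_\ell$ (of leaf size at least $1$) is removed. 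Summing the two contributions over all $t$ leaves gives $E[t - t'] \geq t/m + t/(2m) = 3t/(2m)$.

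To iterate, note that re-simplifying the restricted formula to a minimal one only decreases leaf size, so the one-step bound applies again with $m$ decremented. Writing $s_i$ for the expected leaf size after $i$ restrictions (so $s_0 = s$) and applying the bound with $m = n-i$, iterated expectation gives
\begin{equation*}
E[s_{n-k}] \leq s \prod_{i=k+1}^{n} \Bigl(1 - \frac{3}{2i}\Bigr).
\end{equation*}
The elementary inequality $(1 - 1/i)^{3/2} > 1 - 3/(2i)$ for integer $i \geq 2$ (immediate from checking that $(1-x)^{3/2} - (1 - 3x/2)$ vanishes at $0$ and has positive derivative on $(0,1)$) combined with the telescoping identity $\prod_{i=k+1}^n (1 - 1/i) = k/n$ then yields $E[s_{n-k}] \leq s(k/n)^{3/2}$.

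The main obstacle is making sure the minimality hypothesis is available at every step of the induction. A random restriction can destroy minimality, so between steps the formula must be re-simplified by pushing constants upward and re-applying the neighbor observation whenever a literal's neighbor has come to depend on its variable. This simplification is legal because it preserves the computed function and never increases leaf size, so the bounds compose across iterations. A subsidiary point worth verifying is that the per-leaf accounting in the per-step bound does not double count: each leaf has a unique parent and a unique sibling subtree, and the event ``$T_\ell$ is removed'' is naturally charged to the single leaf $\ell$, so the two contributions add cleanly.
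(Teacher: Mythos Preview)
Your proposal is correct and follows essentially the same route as the paper: reduce to a one-variable restriction, show expected shrinkage of $3s/(2n)$ leaves per step via the leaf-plus-neighbor count, bound $1-3/(2i)\le(1-1/i)^{3/2}$, and telescope. Your write-up is in fact a bit more careful than the paper's in two places---you explicitly justify re-minimizing between steps so the neighbor hypothesis remains available, and you spell out the elementary inequality---but the decomposition and the key idea are identical.
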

\begin{proof}
  First consider the special case where we restrict only one randomly
  chosen variable $x$. At the very least, all occurrences of $x$
  disappear. However, with a probability of $\frac12$ we can also
  remove the neighbor of $x$. Since the neighbor does not depend on
  $x$ by assumption, it must contain at least one occurrence of
  another variable. In expectation, $x$ feeds into $\frac{s}n$
  gates. Hence the expected number of leafs that we remove is $3/2
  \frac{s}n$ and the expected size of the remaining tree is at
  most $$s - \frac{3s}{2n} = \left(1 - \frac3{2n}\right) s \leq
  \left(\frac{n-1}n\right)^{3/2}s$$

  Restricting a random variable $n-k$ times therefore gives a formula
  with size at most 
  $$
  \left(\frac{n-1}n\right)^{3/2}\cdot
  \left(\frac{n-2}{n-1}\right)^{3/2}\cdots
  \left(\frac{k}{k+1}\right)^{3/2} s = \left(\frac{k}n\right)^{3/2} s 
  $$
\end{proof}

Santhanam uses the same ideas for a satisfiability algorithm. Let $F$
be a DeMorgan formula with size $cn$ for some constant $c$. While the
lower bound result considers random restrictions, the satisfiability
algorithm is deterministic. First of all, we simplify the formula as
before. We can remove gates with at least one constant input. For some
constants, we can remove the neighbor of the constant. Furthermore,
for every variable $x$ feeding directly into a gate, replace all
occurrences of $x$ in its neighbor with the appropriate
constant. After simplification, instead of restricting a random
variable, the algorithm restricts the variable that occurs the most
often. Since there are $cn$ leaves in total, we can always find a
variable $x$ that occurs at least $c$ times. Then recursively find a
satisfying assignment for $F|_{x=1}$ and $F|_{x=0}$ until the formula
has no inputs, i.e. is constant. The result is a partition into
restrictions where the circuit is constant, similar to the
satisfiability algorithm for $\aczero$ circuits.

To analyze the runtime of this algorithm, the shrinkage as given by
Subbotovskaya is not sufficient, as it only gives an expected
shrinkage. Instead, Santhanam gives a concentration bound for the
shrinkage. Consider the recursion tree of the algorithm, where every
vertex is labeled by some formula $F$. The children of the node
labeled $F$ are two nodes labeled $F|_{x=1}$ and $F|_{x=0}$,
simplified as described above. The leafs are labeled by constant
functions. The runtime of the algorithm is given, within a polynomial
factor, by the size of this tree. For every non-leaf node, for at
least one of its children, the size of the formula shrinks by $1.5 c$
gates, while for the other child, the formula shrinks by at least $c$
gates. We call the earlier child the ``good child'' and the other one
the ``bad child''. If we consider a randomly chosen path from the root
of the tree to a leaf, with high probability this path picks the
``good child'' several times. Any path that picks the ``good child''
often cannot be very long (i.e. close to $n$), as the tree must arrive
at a leaf when the number of gates shrinks to $1$. As a result, one
can derive a bound on the number of leaves of this tree.

The details of the calculation are omitted here. The savings of the
algorithm take the form $\frac1{\text{poly}(c)}$.

\section{From Satisfiability to Lower Bounds}
\label{sec:Williams}
In the examples of connections between satisfiability algorithms and
lower bounds discussed so far, the connection was implicit in
nature. There is no blueprint for extracting circuit lower bounds from
satisfiability algorithms or vice versa that follows directly from
these results. 

Williams \cite{Williams2010,Williams2011} gives a more formal
connection between satisfiability algorithms for a circuit class and
lower bounds for the same class. Given a satisfiability algorithm that
improves over brute force by only a superpolynomial amount, he
constructs a lower bound against $\nexp$ (nondeterministic exponential
time). Not only is the satisfiability algorithm used as a black box,
the result applies to a large set of natural circuit classes. By
giving a satisfiability algorithm for $\acczero$, Williams completes
an (unconditional) proof for $\nexp \not\subseteq \acczero$. Since the
connection between satisfiability algorithms and circuit bounds is
more general than just $\acczero$ circuits, his result opens up a
possible path to prove further lower bounds in the future.

The technique by Williams achieves a similar goal as the examples in
the previous section, as the result is both a satisfiability algorithm
for $\acczero$ and a lower bound for the same circuit class. The
satisfiability algorithms relies on properties of the circuit
class. However, instead of deriving a circuit lower bound directly
from the same properties, Williams adds another layer of
abstraction. The proof of the circuit lower bound does not depend on
the properties of the circuit directly, but only on the derived
satisfiability algorithm. As a consequence of this abstraction, he is
able to formalize a connection between algorithms and lower
bounds. While it is difficult to characterize what properties of
circuit classes lead to both satisfiability algorithms and lower
bounds, the abstraction allows a quantitative statement on the
required satisfiability algorithm.

In the first paper \cite{Williams2010}, Williams proves that if there
is an algorithm for general circuit satisfiability that improves over
exhaustive search by a superpolynomial amount, then $\nexp
\not\subseteq \ppoly$. The proof is an \emph{indirect diagonalization}
argument. Assuming $\nexp \subseteq \ppoly$ and the existence of a
fast satisfiability algorithm for general $\ppoly$ circuits, it gives
an algorithm to solve an arbitrary problem $L \in
\ntime\left(2^n\right)$ in nondeterministic time
$O\left(2^n/\omega\right)$ for some superpolynomial $\omega$. As a
result, there are no problems in $\ntime\left(2^n\right)$ that are not
in $\ntime\left(2^n/\omega\right)$, which contradicts the
nondeterministic time hierarchy theorem \cite{Cook1973,
  SeiferasFischerMeyer1978}.

For a rough outline of the proof, suppose there is a satisfiability
algorithm for general circuits that improves over exhaustive search by
a superpolynomial factor and $\nexp \subseteq \ppoly$. Then pick an
arbitrary problem $L$ in $\ntime\left(2^n\right)$ and reduce it to the
\emph{Succinct-\ksat[3]} problem, which is $\nexp$-complete. The
Succinct-\ksat[3] problem is a variation on \ksat[3] for exponential
formulas. Instead of having the \kcnf[3] as an direct input, the input
is a polynomial size circuit, such that on input $i$ in binary, the
output is the $i$th bit of the encoding of the \kcnf[3]. The
Succinct-\ksat[3] problem is then to decide if the implied \kcnf[3] is
satisfiable. By the $\nexp$-completeness of Succinct-\ksat[3] we can,
given an input $x$ to $L$ of length $n$, construct a polynomial size
circuit $C$ with $n + O(\log n)$ inputs such that on input $i$ in
binary, the output is the $i$th bit of a \kcnf[3] that is satisfiable
if and only if $x \in L$. The number of variables of this \kcnf[3]
formula is exponential in $n$.

To test the satisfiability of this circuit without explicitly writing
out the \kcnf[3] formula, we use the idea of a \emph{universal
  witness}. Impagliazzo, Kabarnets and Wigderson
\cite{ImpagliazzoKabanetsWigderson2002} show that if $\nexp \subseteq
\ppoly$, then for every satisfiable instance of a Succinct-\ksat[3]
problem there is a polynomial size circuit such that on input $i$ in
binary, it outputs the value of the $i$th variable in a satisfying
assignment.

The nondeterministic algorithm proceeds as follows. First
nondeterministically guess the universal witness for the given
Succinct-\ksat[3] problem. Since the goal is to give an algorithm that
runs in $\ntime\left(2^n/\omega\right)$ the algorithm is free to use
nondeterminism at this point. Let this circuit be called $D$. From the
Succint-\ksat[3] instance $C$ we can construct a circuit $C'$ that
takes as input a number $i$ in binary, and outputs the $i$th clause,
consisting of three variables in binary (requiring $n + O(\log n)$
bits each) and three bits to indicate if the literals are
negated. Each of these variables is then given as input to the circuit
$D$. As a last step, we can check if the values that $D$ assigns to
the variables satisfies the clause.

The circuit $D$ is a universal witness for the \kcnf[3] formula if and
only if the constructed circuit is unsatisfiable, i.e. there is no
input $i$ such that the universal witness does not give an assignment
that satisfies the $i$th clause. Using the assumed fast algorithm for
circuit satisfiability, we can decide this in time
$O\left(2^n/\omega\right)$, resulting in an overall algorithm in
$\ntime\left(2^n/\omega\right)$, contradicting the nondeterministic
time hierarchy theorem.

In the second paper \cite{Williams2011}, Williams refines his result
for restricted circuit classes. For any circuit class $\mathcal{C}$
that contains $\aczero$ and is closed under composition, if there is a
satisfiability algorithm for $\mathcal{C}$ that improves over
exhaustive search by a superpolynomial amount, then $\nexp
\not\subseteq \mathcal{C}$. The main part of the proof is ensuring
that the circuit constructed for the proof is in the class
$\mathcal{C}$ so that we can apply the supposed algorithm for
$\mathcal{C}$-SAT. In particular, the circuit $C'$ that takes as input
a value $i$ and returns the $i$th clause is not necessarily in the
class $\mathcal{C}$. The key idea to get around this is by guessing
and checking an equivalent $\mathcal{C}$-circuit, and then building
the whole circuit using the guessed component. By giving an algorithm
for $\acczero$-SAT in the same paper he completes the proof for $\nexp
\not\subseteq \acczero$.

If this approach is useful for other circuit classes than $\acczero$
depends on if it is possible to find fast satisfiability algorithms
for these classes. The result does certainly motivate the search for
satisfiability algorithms for circuit classes that sit in expressive
power somewhere between $\acczero$ and $\ppoly$. 

\section{Reductions to Polynomial Time Problems}
\label{sec:splitList}
Lower bounds and algorithms faster than the trivial approach are not
something unique to circuits. For example, matrix multiplication has a
trivial $O(n^3)$ algorithm. Until Strassen \cite{Strassen1969} gave a
faster algorithm, it was unknown if this is the best we can do. Since
then, several algorithms were discovered that improve on Strassen's
runtime, most notably Coppersmith and Winograd
\cite{CoppersmithWinograd1987}, and Williams
\cite{VVWilliams2012}. Despite this progress, the exact value for the
\emph{matrix multiplication exponent}, the smallest $\omega$ such that
matrix multiplication can be solved in time $O(n^{\omega})$ is still
unknown. It follows from Williams' result that $\omega < 2.3727$, but
it is not clear how far this can be improved.

The ingenuity that goes into these faster algorithm can be used for
faster satisfiability algorithms. Williams \cite{Williams2004} uses
this idea directly and reduces \maxksat[2] to matrix
multiplication. Would one use the reduction to matrix multiplication
and then use the trivial algorithm for the multiplication, the
resulting algorithm would run in time $\tilde{O}\left(2^n\right)$. It
is the faster matrix multiplication algorithm that results in constant
savings.

The other examples discussed here reduce a satisfiability problem to
other polynomial time problems. Impagliazzo, Paturi and Schneider
\cite{ImpagliazzoPaturiSchneider2013} give a satisfiability algorithm
for depth two threshold circuits that reduces the problem to the
\emph{Vector Domination Problem}, the problem of finding two vectors
such that one dominates the other on every coordinate. The problem can
be trivially solved in quadratic time. However, only using an
algorithm faster than quadratic for the vector problem do we get a
satisfiability algorithm with any savings.

Lastly we discuss a result by P\u{a}tras\c{c}u and Williams
\cite{PatrascuWilliams2010}, who reduce \cnfsat to
\emph{$k$-Dominating Set}, the problem of finding a set of at most $k$
vertices in a graph such that every vertex is either in the set or
adjacent to a vertex in the set. This reduction has a different flavor
from the other reductions in the conclusions we can draw from the
result. For both \maxksat[2] and threshold circuits, the result is an
algorithm with constant savings by using a fast algorithm for the
polynomial time problem. For \cnfsat, no algorithm with constant
savings is known. If one subscribes to the belief that there are no
algorithms with constant savings for \cnfsat, then the reduction gives
a lower bound for $k$-Dominating Set. If one does not believe that
such a lower bound exists, then the reduction gives a mean to find a
fast satisfiability algorithm.

All three algorithm follow a paradigm called ``Split and List'': Split
the variable set into several parts, and list every possible
restriction of one of the parts. Using the list of (exponentially
many) restrictions, the problem is then reduced to an exponentially
large instance of the underlying polynomial time problem. As a
consequence of the ``Split and List'' approach, all three algorithms
require exponential space, which is a limiting factor for using these
algorithms in practice.

The ``Split and List'' approach opens up a wide range of possible
algorithms to explore. While matrix multiplication is a well studied
problem with countless applications, the same is not true for the
Vector Domination Problem. I am not aware of any applications outside
of the literature on satisfiability algorithms, although it is not
unlikely that it was used (under a different name) in a different
context. This motivates looking for more problems that have not
gathered a lot of attention but might have both a ``Split and List''
reduction from satisfiability problems and a nontrivial algorithm. A
good place to start might be \emph{quadratic problems}, problems whose
trivial algorithm runs in quadratic time. The Vector Domination
Problem is an example. There are many problems where the goal is to
find a pair that satisfies some property and that have trivial
quadratic runtime. Just as there is no known characterization of which
circuit classes allow fast satisfiability algorithms, there is no
characterization of which quadratic problems have subquadratic
algorithms.

For more applications of the ``Split and List'' approach, see Chapter
6 of Williams' Ph.D. thesis \cite{Williams2007}. P\u{a}tras\c{c}u and
Williams \cite{PatrascuWilliams2010} also discuss further examples.

\subsection{\maxksat[2]}
\label{sec:max2sat}
In this section we consider an algorithm for \maxksat[2] by Williams
\cite{Williams2004}. Let $F$ be a \kcnf[2] on $n$ variables and $m$
clauses. The \maxksat[2] problem asks if given a threshold $t$, is it
possible to satisfy at least $t$ clauses. Williams gives an algorithm
with constant savings that also generalizes to a weighted version of
\maxksat[2], if the weights are small and integer. For the purpose of
this paper, we will consider the unweighted case only.

This algorithm does not generalize directly to \maxksat. There are no
known algorithms that achieve constant savings for \maxksat for $k\geq
3$. This marks a significant difference between \maxksat and \ksat, as
for \ksat there are algorithms achieving constant savings for all
constants $k$.

Using a ``Split and List'' technique, the algorithm reduces
\maxksat[2] to the problem of finding a triangle in a $2^{n/3}\times
2^{n/3} \times 2^{n/3}$ tripartite graph, which in turn can be solved
by multiplying two $2^{n/3}\times 2^{n/3}$ matrices.
% If we multiply the matrices using a trivial $O\left(n^3\right)$
% algorithm, the resulting \maxksat[2] algorithm requires time
% $\tilde{O}\left(2^n\right)$. However, by using a fast matrix
% multiplication algorithm we get constant savings.
Let $\omega$ denote the matrix multiplication exponent, i.e. the
exponent of the fastest possible matrix multiplication algorithm.

\begin{theorem}
  Let $F$ be a \kcnf[2] and let $t \in \nats$. There is an algorithm
  that to find an assignment that satisfies at least $t$ clauses and
  runs in time $\tilde{O}\left(2^{\frac{\omega}3 n}\right)$, where
  $\omega$ is the matrix multiplication exponent.
\end{theorem}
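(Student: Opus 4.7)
The plan is to reduce MAX-2-SAT to finding a maximum-weight triangle in a tripartite graph via a ``Split and List'' argument, and then reduce max-weight triangle to a polynomial number of Boolean matrix multiplications.

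First I would arbitrarily partition the variable set $V$ into three blocks $V_1, V_2, V_3$ with $|V_i| = n/3$, and for each $i$ list all $N = 2^{n/3}$ assignments to $V_i$, which become the vertices of a tripartite graph with parts $A_1, A_2, A_3$. Every clause of $F$ has at most two variables, so each clause falls into one of six classes depending on which block(s) its variables lie in: an intra-block class $E_i$ (clauses on $V_i$ alone) or an inter-block class $E_{ij}$ (one variable in $V_i$, one in $V_j$). For each pair $i<j$ and each pair of partial assignments $(a_i, a_j) \in A_i \times A_j$, I would precompute the number $w_{ij}(a_i, a_j)$ of clauses in $E_{ij}$ satisfied by $(a_i, a_j)$, and similarly $w_i(a_i)$ for clauses in $E_i$. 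Absorbing the intra-block counts into the edges (say $W_{12}(a_1,a_2) = w_{12}(a_1,a_2) + w_1(a_1) + w_2(a_2)$ and $W_{ij} = w_{ij}$ for the other two pairs) ensures that for every full assignment $(a_1,a_2,a_3)$ the total number of satisfied clauses equals $W_{12}(a_1,a_2) + W_{23}(a_2,a_3) + W_{13}(a_1,a_3)$. The MAX-2-SAT question is then exactly the question of whether the tripartite graph contains a triangle of total weight at least $t$.

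Next, since each edge weight is a nonnegative integer bounded by $m \leq \binom{n}{2}$, the target $t$ decomposes as $t = t_{12} + t_{13} + t_{23}$ in only $\text{poly}(n)$ ways. For each such decomposition I would form three Boolean matrices $M_{ij}[a,b] = 1$ iff $W_{ij}(a,b) \geq t_{ij}$, compute the Boolean product $P = M_{12} \cdot M_{23}$ via fast matrix multiplication in time $\tilde{O}(N^\omega)$, and then scan for a pair $(a_1, a_3)$ with $P[a_1,a_3] = M_{13}[a_1,a_3] = 1$. Such a pair certifies a triangle of total weight at least $t$, and conversely any triangle of weight at least $t$ is detected by the decomposition matching its actual edge weights.

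Summing over the $\text{poly}(n)$ decompositions gives total running time $\tilde{O}(2^{\omega n /3})$, which dominates the $\tilde{O}(2^{2n/3})$ cost of building the weight tables. The main point requiring care is the bookkeeping of clauses across the six classes: I must assign each clause to exactly one $E_i$ or $E_{ij}$ and distribute $w_i(a_i)$ to exactly one incident edge weight, so that no clause is counted twice or omitted; I expect this accounting, rather than any algorithmic novelty beyond fast matrix multiplication, to be the only place where a slip could occur.
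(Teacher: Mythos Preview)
Your approach is correct and essentially the same as the paper's: split into three blocks, build a tripartite graph on partial assignments, and reduce weighted triangle detection to Boolean matrix multiplication. The only notable difference is bookkeeping: the paper enumerates $6$-tuples $(s_a,s_b,s_c,s_{ab},s_{bc},s_{ac})$ of \emph{exact} clause counts (at most $m^6$ tuples), whereas you enumerate $3$-tuples $(t_{12},t_{23},t_{13})$ summing to $t$ and use $\geq$-thresholds---your version is slightly leaner but the idea is identical.

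One small slip, exactly where you anticipated it: with $W_{12}=w_{12}+w_1+w_2$ and $W_{23}=w_{23}$, $W_{13}=w_{13}$, the intra-block count $w_3(a_3)$ is never absorbed, so your triangle weight misses the clauses on $V_3$ alone. Fix this by setting, e.g., $W_{13}=w_{13}+w_3$; then the accounting is correct.
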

\begin{proof}
  We assume $n$ is divisible by $3$. Separate the set of variables
  into three sets $A$,$B$, and $C$ all of size $\frac{n}3$. We can
  then distinguish six types of $2$-clauses:
  \begin{enumerate}
  \item Both variables of the clause are in $A$.
   \item Both variables are in $B$.
   \item Both variables are in $C$.
   \item Exactly one variable is in $A$ and exactly one variable is in
     $B$.
   \item One variable is in $B$ and one variable is in $C$.
   \item One variable is in $A$ and one variable is in $C$.
  \end{enumerate}

  We say a clause is of type $T_a$, $T_b$, $T_c$, $T_{ab}$, $T_{bc}$
  or $T_{ac}$ respectively. 

  For numbers, $s_a$, $s_b$, $s_c$, $s_{ab}$, $s_{bc}$ and $s_{ac}$
  such that their sum is at least $t$, the algorithm decides if there
  is an assignment that satisfies exactly $s_D$ clauses of type $T_D$
  for $D \in \{a, b, c, ab, bc, ac\}$. Since each number $s_D$ is a
  number between $0$ and $m$, there are at most $m^6$ combinations of
  numbers. A \kcnf[2] has at most $4n^2$ clauses, hence solving each
  combination separately is only a polynomial overhead.

  We now construct the following graph. Let $V_A$ be the set of
  assignments to the variables $A$ such that it satisfies exactly
  $s_a$ variables of type $T_a$. Likewise, let $V_B$ and $V_C$ be the
  set of assignments that satisfy exactly $s_b$ and $s_c$ clauses of
  their respective type. The vertex set of the graph is $V = V_A \cup
  V_B \cup V_C$. We have an edge between a vertex in $V_A$ and a
  vertex in $V_B$ if the two assignments together satisfy exactly
  $s_{ab}$ clauses of type $T_{a,b}$. We add edges between $V_B$ and
  $V_C$, and $V_A$ and $V_C$ in a similar fashion.

  There is an assignment to the variables that satisfies exactly $s_D$
  clauses of type $T_D$ for all $D$, if and only if there is a
  triangle in the constructed graph. The assignment corresponds to the
  three vertices in the triangle.

  To find a triangle using matrix multiplication we construct a matrix
  $M_{ab}$ such that $M_{ab}[i,j] = 1$ if there is and edge between
  the $i$th element of $V_A$ and the $j$th element of $V_B$. We also
  construct matrices $M_{bc}$ and $M_{ac}$ in a similar fashion. Then
  there is a triangle if and only if there is an $i$ and a $j$ such
  that $\left(M_{ab}\cdot M_{bc}\right)[i,j] \geq 1$ and $M_{ac}[i,j]
  = 1$. Since all matrices have size at most $2^{n/3}\times 2^{n/3}$
  we can do the multiplication in time $O\left(2^{\frac{\omega}3
      n}\right)$. We have a multiplicative overhead as we have to do a
  matrix multiplication for every possible combination of numbers
  $s_D$. However, this overhead only contributes a polynomial factor
  to the time of the whole algorithm.
\end{proof}

\subsection{Threshold Circuits}
\label{sec:threshold}
In this section we consider threshold circuits of depth two. The
algorithm by Impagliazzo, Paturi, and Schneider
\cite{ImpagliazzoPaturiSchneider2013} combines random restrictions as
in Subbotovskaya's lower bound for DeMorgan Formulas with the ``Split
and List'' approach of Williams' \maxksat[2] algorithm.

We give an algorithm that decides satisfiability of a depth two
threshold circuit on $n$ variables with $cn$ wires and arbitrary real
weights that runs with savings of the form $\frac1{c^{O(c^2)}}$. For
this algorithm, the restriction on the size of the circuit is not on
the number of gates, but on the number of literal occurrences.

The algorithm proceeds in two steps. First, we reduce the
satisfiability problem on a depth two threshold circuit with $cn$
wires to (not too many) satisfiability problems on depth two threshold
circuits on $n'$ variables and $\delta n'$ bottom-level gates, where
$\delta$ is a small constant we can choose freely. For the circuit
with few bottom-level gates, we need to allow \emph{direct wires},
i.e. variables that directly feed into the top-level gate.

As a second step, we reduce the satisfiability problem on the
remaining circuit to a problem we call the \emph{Vector Domination
  Problem}. The Vector Domination Problem takes as inputs two sets of
$d$-dimensional real vectors $A$ and $B$ with $|A| + |B| = N$ and the
goal is to find a vector $a \in A$ and a vector $b\in B$ such that for
every coordinate $i$, $a_i \leq b_i$. The reduction follows the
``Split and List'' paradigm.

The reduction from the satisfiability problem of depth two threshold
circuits on $n'$ variables and $\delta n'$ bottom-level gates to the
Vector Domination Problem gives an instance with $|A| = |B| = 2^{n/2}$
and dimension $d = \delta n$. We could then solve the Vector
Domination problem with the trivial $O(N^2)$ algorithm. Unfortunately,
this would not give an algorithm faster than exhaustive
search. Instead, we give an algorithm faster than quadratic for
$\delta < 0.136$ which yields a satisfiability algorithm with constant
savings.

For the reduction from a threshold circuit with a linear number of
wires to a threshold circuit with few bottom-level gates, the idea is
to select an (as large as possible) set $S$, such that restricting all
variables not in $S$ results in a circuit with at most $\delta |S|$
bottom-level gates. The key observation is that gates that depend on
at most one variable in $S$ simplify to a constant or a direct wire to
the top-level gate after restriction, independent of the values the
restriction assigns to the variables. We omit the details of the
calculations here. It is possible to find a set $S$ with $|S| \geq
\frac{\delta}{c^{O(c^2)}}$ such that the circuits have at most $\delta
|S|$ bottom-level gates. The proof relies on random restrictions.

The reduction from the satisfiability problem of a depth two threshold
circuit on few bottom-level gates to the Vector Domination Problem is
by a ``Split and List'' approach. First, for all remaining
bottom-level gates, fix the output to either $0$ or $1$. There are
$2^{\delta |S|}$ such combinations. For every threshold gate, we can
express the condition that its output is $1$ or $0$ respectively as a
linear inequality. Given output values for the bottom-level gate, we
can also express the top-level gate as a linear inequality in the
input variables. We now reduce the resulting system of linear
inequalities to the Vector Domination Problem as follows. Split the
set of remaining variables $S$ into two sets $S_1 =
\{x_1,\ldots,x_{|S|/2}\}$ and $S_2 = \{x_{|S|/2+1},\ldots, x_{|S|}\}$
of equal size. A linear inequality of the form $\sum_{i=1}^{|S|} a_i
x_i \geq t$ is true if and only if $\sum_{i=1}^{|S|/2} a_i x_i \geq t
- \sum_{i=|S|/2+1}^{|S|} a_i x_i$. Hence we can list all possible
assignments to the variables in $S_1$ and calculate
$\sum_{i=1}^{|S|/2} a_i x_i$ for each of the $\delta |S| + 1$
inequalities. Likewise, calculate $t - \sum_{i=|S|/2+1}^{|S|} a_i x_i$
for each assignment to $S_2$ and each inequality. The system of
inequalities is then satisfied by an assignment to both $S_1$ and
$S_2$ if the vector of these values for the assignment to $S_1$
dominates the vector of values for the assignment $S_2$. The resulting
Vector Domination Problem has $N= 2\cdot2^{|S|/2}$ vectors and
dimension $\delta |S| + 1 \approx 2\delta \log N$.

The last part of the algorithm for depth two threshold circuits is an
algorithm for the Vector Domination problem. Let $A$ and $B$ be the
two sets vectors of dimension $d$. Let $N= |A| + |B|$. The algorithm
is faster than the trivial $O(N^2)$ for $d \leq 0.272 \log N$ and
works as follows. Let $m$ be the median of the first coordinates of
both $A$ and $B$ and split the sets $A$ and $B$ into sets $A^+$,
$A^=$, $A^-$, $B^+$, $B^=$, and $B^-$ depending on if the first
coordinate is larger, equal, or smaller than the median. Then for a
vector $a \in A$ to dominate a vector $b \in B$ either $a \in A^+$ and
$b \in B^+$, or $a \in A^-$ and $b \in B^-$, or $a \in A^+ \cup A^=$
and $b \in B^- \cup B^=$. In the first two cases, the number of
vectors can be at most half as we split at the median. In the last
case, we know that the first coordinate of $a$ dominates the first
coordinate of $b$. We can therefore recurse on vectors of dimension
$d-1$. Furthermore, we require time $O(N)$ to calculate the median and
split the sets $A$ and $B$. Hence the runtime of this algorithm for
$N$ vectors of dimension $d$ is bounded by the recurrence relation.
\begin{equation*}
  T(N,d) = 2T(N/2,d) + T(N,d-1) + O(N)
\end{equation*}
which solves to
\begin{equation*}
  T(N,d) = \binom{d + \log N +2}{d+1}O(N)
\end{equation*}

This runtime is $O\left(N^{2 - f(\delta)}\right)$ where $f(\delta) >
0$ for $\delta < 0.272$. This results in an algorithm for the
satisfiability problem for depth two threshold circuits on $n'$
variables with $\delta n'$ bottom-level gates that runs in time
$\tilde{O}\left(2^{(1-g(\delta))n'}\right)$, where $g(\delta) > 0$ if
$\delta < 0.099$. The stronger requirement for $\delta$ comes from the
additional overhead of guessing the output value for all bottom-level
gates.

Choosing an arbitrary value $\delta$ smaller than $0.099$ results in a
satisfiability algorithm for depth two threshold circuits with $cn$
wires that runs in time
$\tilde{O}\left(2^{(1-1/c^{O(c^2)}))n}\right)$.
 
\subsection{Reductions as Lower Bounds}
\label{sec:dominatingSet}
In this section we discuss a reduction from a satisfiability problem
to a polynomial time problem that is considerably different from the
reductions discussed above in the conclusion it allows.  In the
previous sections we got satisfiability algorithms with constant
savings by first reducing to a polynomial time problem and then
solving that problem in a nontrivial way.  In this section we discuss
a result by P\u{a}tras\c{c}u and Williams \cite{PatrascuWilliams2010}
which reduces \cnfsat to \emph{$k$-Dominating Set}. For $k$-Dominating
Set, we are given a graph on $n$ vertices and $m$ edges and find a set
$S$ of $k$ vertices, such that every vertex is either in $S$ or
adjacent to a vertex in $S$.

There are two ways to interpret the result. If you subscribe to the
belief that there are no algorithms for \cnfsat with constant savings,
then there are no algorithms faster than the currently known ones for
$k$-Dominating Set. On the other hand, if you believe there to be
faster \cnfsat algorithms, then these reductions provide a possible
way of finding such an algorithm.

The trivial algorithm for the $k$-Dominating Set problem is to
enumerate all $\binom{n}k$ sets of size $k$ and test each of them in
linear time. The resulting algorithm then runs in time
$O\left(n^{k+1}\right)$. Eisenbrand and Grandoni
\cite{EisenbrandGrandoni2004} give a faster algorithm that uses fast
matrix multiplication.

\begin{lemma}
  For $k\geq 7$, there is an algorithm for $k$-Dominating Set that
  runs in time $n^{k+o(1)}$. 
\end{lemma}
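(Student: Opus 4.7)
The plan is to reduce $k$-Dominating Set to a rectangular Boolean matrix multiplication and then invoke the Coppersmith bound on fast rectangular multiplication. First I split the target dominating set in half: set $k_1 = \lceil k/2 \rceil$ and $k_2 = \lfloor k/2 \rfloor$, and enumerate all $\binom{n}{k_1} = O(n^{k_1})$ subsets of size $k_1$ and all $\binom{n}{k_2} = O(n^{k_2})$ subsets of size $k_2$. For each such subset $S$, in polynomial time I compute the set of vertices \emph{not} dominated by $S$ (those outside $S$ with no neighbor in $S$). Then I build two Boolean matrices: $A$ of dimension $\binom{n}{k_1}\times n$ with $A[S_1,v] = 1$ iff $v$ is not dominated by $S_1$, and $B$ of dimension $n \times \binom{n}{k_2}$ with $B[v,S_2] = 1$ iff $v$ is not dominated by $S_2$.

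Next I compute the Boolean product $C = A\cdot B$. By construction $C[S_1,S_2] = 1$ iff some vertex $v$ is undominated by both $S_1$ and $S_2$, so $C[S_1,S_2] = 0$ iff $S_1\cup S_2$ dominates $V$; scanning $C$ for a zero entry then decides the instance. Correctness follows because any dominating set of size at most $k$, after padding to exactly $k$ vertices if necessary, can be split into parts of sizes $k_1$ and $k_2$ that appear as rows and columns of $A$ and $B$. Since Boolean matrix multiplication reduces to integer matrix multiplication with only polylogarithmic overhead, the runtime is dominated by the cost of an $N\times n$ by $n\times N$ integer product with $N = \Theta(n^{k/2})$.

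To bound that cost I invoke Coppersmith's rectangular matrix multiplication: the product of an $N\times N^{\alpha}$ matrix and an $N^{\alpha}\times N$ matrix can be computed in $N^{2+o(1)}$ time for some constant $\alpha \geq 0.294$. The hypothesis $k \geq 7$ is exactly what is needed to apply this bound, since then the inner dimension satisfies $n = N^{2/k} \leq N^{2/7} < N^{\alpha}$. Plugging in $N = \Theta(n^{k/2})$ gives runtime $(n^{k/2})^{2+o(1)} = n^{k+o(1)}$, as claimed.

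The main obstacle in carrying this through is verifying the applicability of the Coppersmith bound for the chosen dimensions, which is precisely the role of the assumption $k \geq 7$. For smaller $k$ the inner dimension $n$ becomes too large relative to $N^{\alpha}$ and one recovers only the trivial $n^{k+1}$ bound; extending $n^{k+o(1)}$ to smaller $k$ would require a finer tripartite split together with sharper rectangular multiplication bounds, but for $k\geq 7$ the two-way split above suffices.
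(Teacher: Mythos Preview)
The paper does not actually prove this lemma; it simply cites Eisenbrand and Grandoni and says ``The details of the algorithm are omitted here.'' Your reduction to rectangular Boolean matrix multiplication is exactly the Eisenbrand--Grandoni argument, so there is nothing to compare against and your approach is the intended one.

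One point in your write-up deserves care. You assert that the product has shape $N\times n$ by $n\times N$ with $N=\Theta(n^{k/2})$ and then invoke Coppersmith via $n=N^{2/k}\le N^{2/7}<N^{\alpha}$. This is literally correct only for even $k$. When $k$ is odd the two outer dimensions are $\Theta(n^{\lceil k/2\rceil})$ and $\Theta(n^{\lfloor k/2\rfloor})$, which differ by a factor of about $n$. For odd $k\ge 9$ this is harmless: chop the longer side into $n$ blocks and you are left with square products of outer dimension $M=n^{(k-1)/2}$ and inner dimension $n=M^{2/(k-1)}\le M^{1/4}<M^{\alpha}$, giving total time $n\cdot M^{2+o(1)}=n^{k+o(1)}$. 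The borderline case $k=7$, however, produces an $n^{4}\times n\times n^{3}$ product, and after blocking you face $n^{3}\times n\times n^{3}$ products whose inner exponent is $1/3>0.294$; the Coppersmith constant you quote does not directly cover this. So your sentence ``the hypothesis $k\ge 7$ is exactly what is needed'' is not quite justified by the argument as written --- it cleanly gives $n^{k+o(1)}$ only from $k\ge 8$, and $k=7$ requires either a sharper rectangular-multiplication bound or the more careful treatment in the original reference. At a minimum you should acknowledge the asymmetry for odd $k$ rather than writing $N=\Theta(n^{k/2})$.
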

The details of the algorithm are omitted here.

While this algorithm improves over the trivial algorithm by almost a
linear factor, it still requires that we consider every possible set
of size $k$.

Assuming that there is an algorithm for $k$-Dominating Set that avoids
listing every possible set of size $k$, we construct an algorithm for
\cnfsat that achieves constant savings.

\begin{theorem}
  Assume there exists $k\geq3$ such that $k$-Dominating Set has an
  algorithm that runs in time $O\left(n^{k-\varepsilon}\right)$ for
  some $\varepsilon > 0$. Then there is an algorithm for \cnfsat that
  runs in time $\tilde{O}\left(2^{(1-\varepsilon/k)n}\right)$. 
\end{theorem}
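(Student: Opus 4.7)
The plan is a ``Split and List'' reduction turning an $n$-variable CNF formula $F$ with $m = \mathrm{poly}(n)$ clauses into a $k$-Dominating Set instance on $N = \tilde{O}(2^{n/k})$ vertices. Partition the variables into $k$ blocks $V_1,\ldots,V_k$ of size $n/k$, and for each block create a group $U_i$ with one vertex $u_{i,\alpha}$ per partial assignment $\alpha:V_i\to\{0,1\}$; make each $U_i$ into a clique so that a single vertex of $U_i$ dominates the whole group. For each clause $C$ of $F$, add a clause vertex $v_C$ and connect $v_C$ to $u_{i,\alpha}$ whenever $\alpha$ already sets some literal of $C$ to true.

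The key structural gadget forces any dominating set of size exactly $k$ to pick exactly one vertex from each $U_i$. Attach to each group two new vertices $x_i$ and $y_i$, both adjacent to every vertex of $U_i$ but nonadjacent to each other and to the rest of the graph. Then $x_i$ can be dominated only by something in $U_i\cup\{x_i\}$, and similarly for $y_i$, so covering both $x_i$ and $y_i$ costs $1$ if a vertex of $U_i$ is chosen and at least $2$ otherwise. With a total budget of $k$ spread over $k$ groups, each group must contribute exactly one vertex, and that vertex must lie in $U_i$. Consequently a satisfying assignment $\beta$ of $F$ yields the dominating set $\{u_{i,\beta|_{V_i}}\}_{i=1}^k$, and conversely any size-$k$ dominating set of $G$ decodes to an assignment $(\alpha_1,\ldots,\alpha_k)$ with every clause vertex dominated, i.e.\ every clause satisfied. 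Hence $F$ is satisfiable iff $G$ has a dominating set of size $k$.

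Counting vertices gives $N = k\cdot 2^{n/k} + 2k + m = \tilde{O}(2^{n/k})$, so invoking the hypothesized $k$-Dominating Set algorithm yields overall runtime
\[
O\bigl(N^{k-\varepsilon}\bigr) \;=\; \tilde{O}\bigl(2^{(k-\varepsilon)n/k}\bigr) \;=\; \tilde{O}\bigl(2^{(1-\varepsilon/k)n}\bigr),
\]
as required. The main obstacle is making the gadget airtight: one must rule out clever size-$k$ dominating sets that use clause vertices $v_C$ or the auxiliary $x_i, y_i$ themselves in place of a vertex of some $U_i$, which is precisely what the two-vertex forcing gadget together with the tight budget argument above accomplishes.
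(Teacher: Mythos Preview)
Your proof is correct and follows the same Split-and-List reduction as the paper: partition the variables into $k$ blocks, create one vertex per partial assignment, make each block a clique, add clause vertices wired to the partial assignments that satisfy them, and attach a forcing gadget to each block. The only difference is the gadget---you hang two nonadjacent pendants $x_i,y_i$ off each $U_i$, while the paper adds a single dummy vertex \emph{inside} each clique; your version more cleanly forces the representative of group $i$ to lie in $U_i$ itself, whereas the paper's version allows the dummy to be chosen (harmless, since the remaining partial assignments already cover every clause and one extends arbitrarily on that block), but the vertex count $k\cdot 2^{n/k}+O(k)+m$ and the runtime calculation are identical.
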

\begin{proof}
  Fix $k\geq 3$ to the smallest value such that there is a fast
  algorithm for $k$-Dominating Set. We assume that $k$ divides $n$.

  Given a \cnf on $n$ variables and $m$ clauses, we construct a graph
  in a ``Split and List'' fashion very similar to the construction of
  the graph for the \maxksat[2] algorithm. Split the vertex set into
  $k$ sets of size $\frac{n}k$ each and for each assignment to one of
  sets, add a vertex to the graph. We further add edges such that each
  of the groups of $2^{n/k}$ vertices is a clique. For each clause we
  add one extra vertex which we connect to all vertices that
  correspond to partial assignments that satisfy the clause, i.e. the
  partial assignment assigns the value $1$ to at least one literal in
  the clause. Lastly we add one extra vertex to every clique, not
  connected to any clause. We call this vertex the \emph{dummy node}.

  Consider a $k$-dominating set $S$ for this graph. Since every dummy
  node is covered, there must be at least one vertex chosen from every
  clique. Since there are $k$ cliques, each clique must have exactly
  one element in $S$. Furthermore, for every clause, there must be
  vertex in $S$ that is connected to the clause. Hence the partial
  assignment represented by that vertex satisfies the
  clause. Therefore, the union of the partial assignments in the set
  $S$ is an assignment that satisfies every clause.

  The number of vertices in the graph is $k2^{n/k} + k + m$. By
  assumption we can solve the $k$-Dominating Set problem, and
  therefore the \cnfsat problem, in time
  $$O\left(\left(k2^{n/k} + k + m\right)^{k-\varepsilon}\right) =
  O\left(2^{(1-\varepsilon/k) n}\right) \text{poly}(m)$$
\end{proof}

{\bf Acknowledgment:} I thank Ramamohan Paturi for helpful comments on
an earlier draft. 

\bibliographystyle{plain}
\bibliography{../../papers/refs.bib}

\end{document}